\documentclass[11pt]{article}

\usepackage[a4paper,margin=1in]{geometry}
\usepackage{amsmath,amssymb,amsthm,mathtools}
\usepackage{microtype}
\usepackage{xcolor}
\usepackage[colorlinks=true,linkcolor=blue,citecolor=blue,urlcolor=blue]{hyperref}
\usepackage[
  backend=biber,
  style=numeric,
  sorting=none,
  doi=true,
  url=true,
  isbn=false,
  eprint=false
]{biblatex}
\newcommand{\extarticlelink}[1]{%
  \iffieldundef{doi}
    {\iffieldundef{url}
       {#1}
       {\href{\thefield{url}}{#1}}}
    {\href{https://doi.org/\thefield{doi}}{#1}}}

\DeclareFieldFormat{doi}{\mkbibacro{DOI}\addcolon\space\href{https://doi.org/#1}{\nolinkurl{#1}}}
\DeclareFieldFormat{url}{\mkbibacro{URL}\addcolon\space\url{#1}}
\DeclareFieldFormat{bibhyperref}{\extarticlelink{#1}}
\DeclareFieldFormat[article]{title}{\mkbibquote{\extarticlelink{#1}}}
\AtEveryBibitem{\clearfield{month}\clearfield{day}}

\newtheorem{theorem}{Theorem}
\newtheorem{proposition}[theorem]{Proposition}
\newtheorem{remark}[theorem]{Remark}

\newcommand{\R}{\mathbb{R}}
\newcommand{\Ran}{\operatorname{Ran}}
\newcommand{\Var}{\operatorname{Var}}
\newcommand{\cD}{\mathcal{D}}
\newcommand{\ip}[2]{\langle #1,#2\rangle}
\newcommand{\avg}[1]{\langle #1\rangle_0}

\title{Center-of-Mass Bounds and Harmonic Extremality}
\author{Arseny Pantsialei\thanks{e-mail: wselend@gmail.com} \\
\emph{Institute of Physics, Maria Curie-Sk\l{}odowska University, 20-031 Lublin, Poland}}
\date{}

\begin{document}

\maketitle

\begin{abstract}
We study the center-of-mass observable in one-dimensional many-body systems with translation-invariant interactions and extend the harmonic-rigidity mechanism from the 
one-body setting to an interacting many-body problem. We prove a sharp upper bound on the ground-state center-of-mass fluctuation in terms of the active spectral gap 
associated with the center-of-mass probe, and show that this bound does not require any positivity assumption on the ground state. In the positivity class, we characterize 
the equality case completely. Exact saturation occurs if and only if the external one-body traps are harmonic with a common frequency, while the interaction may remain 
arbitrary within the translation-invariant class. We also identify a natural rigidity defect measuring deviation from the harmonic extremal situation and prove quantitative 
near-saturation estimates controlling both the variance deficit and the spectral weight outside the first active shell. In this way, the paper establishes harmonic confinement 
as the unique static extremizer for the rigid interacting center-of-mass mode at fixed active gap.
\end{abstract}

\section{Introduction}

The one-body paper \cite{Pantsialei2026HarmonicRigidity} solved the static isoperimetric
problem behind the Mandelstam--Tamm bound in one dimension:
among confining scalar potentials with a fixed active gap, the harmonic trap is
the unique maximizer of the ground-state position variance. The present paper
asks whether an analogous extremal principle survives in an interacting
many-body system.

At first sight the answer is not obvious. In an $N$-body Hamiltonian, the
position of a single particle is no longer the right observable: interactions
mix coordinates, the first excited state need not be visible to a given probe,
and the relevant gap is therefore observable-dependent. The correct quantity is
instead the center of mass
\[
X=\frac1M\sum_{i=1}^N m_i x_i,
\qquad
M=\sum_{i=1}^N m_i,
\]
and the correct spectral scale is the \emph{active gap}
\[
\Delta_X=\min\{E_n-E_0:\ \ip{\psi_n}{X\psi_0}\neq 0\}.
\]
This is the first excitation energy actually seen by a rigid displacement of the
whole cloud. In particular, $\Delta_X$ need not coincide with the global gap
$E_1-E_0$.

The resulting theorem has a natural conceptual interpretation. Classical
Kohn-type results state that if the external confinement is harmonic and the
interaction is translation invariant, then the center of mass decouples and
oscillates at the trap frequency independently of the interactions
\cite{Kohn1961,Dobson1994,WuZaremba2014,BreyJohnsonHalperin1989}. In trapped-gas
settings, the same rigid-mode theme also appears in the dynamics of
one-dimensional Bose gases and collective oscillations
\cite{Zaremba1999Dynamics,MenottiStringari2002,Moritz2003Collective}. Here we prove a static converse of
extremal type. At the general level we obtain a sharp center-of-mass bound at
fixed active gap, while in the positivity class we show that harmonic
confinement is the unique way to maximize the ground-state center-of-mass
fluctuation. In this sense, harmonic traps are not only dynamically special;
they are also statically extremal.

Further generalized-Kohn and center-of-mass treatments for parabolic
confinement and conserving many-body approximations may be found in
\cite{MaksymChakraborty1990,Peeters1990,Vignale1995,BonitzBalzerVanLeeuwen2007}.

The many-body step is real and not merely formal. The interaction
\[
U\bigl(\{x_i-x_j\}_{i<j}\bigr)
\]
may be arbitrary inside the translation-invariant class, so the theorem is not a
statement about noninteracting particles in disguise. What survives from the
one-body mechanism is the exact center-of-mass algebra
\[
[H,X]=-\frac{i\hbar}{M}P,
\qquad
[X,[H,X]]=\frac{\hbar^2}{M},
\]
which identifies the universal Thomas--Reiche--Kuhn constant for the rigid
displacement mode. The main point of the paper is that this algebra is strong
enough to recover both a sharp variance bound and a rigidity statement in the
interacting problem.

The derivation of the inequality itself is short once the
center-of-mass TRK identity and the active gap are introduced. The main
contribution of the present work is the converse rigidity statement that exact
saturation in the positive-ground-state class implies common-frequency
harmonic confinement, even in the presence of arbitrary
translation-invariant interactions.

This also places the result naturally at the interface of three strands of
literature. First, it belongs to the family of center-of-mass decoupling and
generalized Kohn theorems for harmonically confined interacting systems
\cite{Kohn1961,Dobson1994,WuZaremba2014,BreyJohnsonHalperin1989,Zaremba1999Dynamics}. Second, it uses the model-independent
commutator philosophy behind energy-weighted sum rules and collective-mode
estimates \cite{LippariniStringari1989,Stringari1996,MenottiStringari2002,Moritz2003Collective}. Third, it continues the sharp
active-gap program of \cite{Pantsialei2026HarmonicRigidity}, but in a setting where the
observable, the gap, and the equality structure are genuinely many-body.

For classical and review treatments of the TRK, oscillator-strength,
and energy-weighted sum rules, see
\cite{Kuhn1925,Inokuti1971,BohigasLaneMartorell1979}.

The main conclusions come in two different levels of generality, and it is
important to distinguish them clearly.

Under standard self-adjointness and confinement assumptions, and assuming only
that the interaction depends on relative coordinates, we prove the sharp bound
\[
\Var_0(X)\le \frac{\hbar^2}{2M\Delta_X}.
\]
This part is robust and does \emph{not} use positivity of the ground state.

If in addition the ground state is strictly positive almost everywhere, then the
equality case can be characterized completely: saturation occurs if and only if
all one-body traps are harmonic with one common curvature, up to shifts of their
centers. Translation-invariant interactions remain arbitrary.

\medskip
The positivity assumption is therefore a real
structural boundary in the paper. It is only needed for the clean converse
direction in the equality theorem, not for the sharp bound itself.

We also record two quantitative statements that make
the rigidity aspect more explicit. The first is a purely spectral estimate: the
variance deficit controls the portion of $X\psi_0$ lying above the first active
energy shell. The second is structural: the defect
\[
G(x_1,\dots,x_N)
=\sum_{i=1}^N V_i'(x_i)-\Omega^2\sum_{i=1}^N m_i x_i,
\qquad
\Omega=\frac{\Delta_X}{\hbar},
\]
controls both the same spectral tail and the variance deficit itself. Thus near
saturation forces approximate concentration of $X\psi_0$ in the first active
shell and approximate harmonicity of the total center-of-mass forcing
profile in the $L^2(\rho_0)$ sense.

From the physical point of view, the theorem identifies a universal upper bound
on the static center-of-mass sensitivity of the ground state at fixed active
gap. In particular, if one probes the system by a weak rigid perturbation
proportional to $X$, then the relevant spectral input is precisely the
$X$-active sector rather than the full gap structure. We do not develop a full
linear-response formalism here, but the sharp bound shows that the ground-state
center-of-mass fluctuation cannot exceed the harmonic value at fixed active
gap, and that in the positivity class exact saturation is possible only for
common-frequency harmonic confinement.

The rest of the paper is organized as follows. Section 2 states the assumptions
and separates the general part of the theory from the positivity class.
Section 3 computes the universal center-of-mass commutators. Section 4 proves
the sharp variance bound. Section~5 treats the equality case and shows that
common-frequency harmonic traps are necessary and sufficient for saturation in
the positivity class. Section 6 proves sufficiency by exact
center-of-mass separation and completes the equality characterization.
Section 7 develops two quantitative rigidity estimates: one in terms of the
variance deficit and one in terms of the structural defect $G$.

\section{Assumptions, scope, and notation}

We consider
\begin{equation}\label{eq:H}
H=\sum_{i=1}^N \frac{p_i^2}{2m_i}+U\bigl(\{x_i-x_j\}_{i<j}\bigr)+\sum_{i=1}^N V_i(x_i),
\qquad
p_i=-i\hbar \partial_{x_i},
\end{equation}
acting on $L^2(\R^N)$.
The standing assumptions are the following.

\begin{enumerate}
\item[(A1)] $m_i>0$ for every $i$, the operator $H$ is self-adjoint and bounded
from below, and $H$ admits a normalized ground state $\psi_0$ with energy $E_0$.
Moreover, we work in the confining regime where $H$ has compact resolvent.
Equivalently, the spectrum of $H$ is purely discrete, each eigenvalue has finite
multiplicity, and there exists a complete orthonormal eigenbasis
$\{\psi_n\}_{n\ge 0}$ with
\[
H\psi_n=E_n\psi_n,
\qquad
E_0<E_1\le E_2\le \cdots,
\qquad
E_n\to +\infty.
\]

\item[(A2)] The interaction is translation invariant in the sense that it depends
only on relative coordinates:
\[
U(x_1,\dots,x_N)=U\bigl(\{x_i-x_j\}_{i<j}\bigr).
\]

\item[(A3)] The external traps satisfy $V_i\in C^1(\R)$. In addition, there
exists a dense subspace $\cD\subset L^2(\R^N)$ such that:
\begin{itemize}
\item $\cD$ is a common invariant core for $H$, for every $x_i$ and $p_i$, and
for the operators $X$ and $P$;
\item $\psi_0\in \cD$ and $X\psi_0\in \operatorname{Dom}(H)$;
\item on $\cD$, the formal commutator identities used below hold as operator
identities:
\[
[H,X]=-\frac{i\hbar}{M}P,
\qquad
[X,P]=i\hbar,
\qquad
[H,P]=i\hbar \sum_{i=1}^N V_i'(x_i).
\]
\end{itemize}
\end{enumerate}

A concrete sufficient class is obtained by taking
$V_i\in C^2(\R)$ and a real-valued translation-invariant interaction
\[
U=U\bigl(\{x_i-x_j\}_{i<j}\bigr),
\]
all with derivatives of at most polynomial growth. Assume that the total potential
\[
W(x_1,\dots,x_N)
:=
U\bigl(\{x_i-x_j\}_{i<j}\bigr)
+
\sum_{i=1}^N V_i(x_i)
\]
is bounded from below and coercive. That is,
$W(x)\to+\infty$ as $|x|\to\infty$.
Then $H$ is essentially self-adjoint on
$\mathcal S(\R^N)$, bounded from below, and has compact resolvent.
Moreover, $\mathcal S(\R^N)$ is a common invariant core for the
operators entering the commutator identities. We do not attempt to optimize
the weakest possible assumptions, since our goal is to isolate the
center-of-mass rigidity mechanism rather than develop the most general
case.

For the equality theory we isolate one additional assumption.

\begin{enumerate}
\item[(P)] The ground state lies in the positivity class:
\[
\psi_0(x_1,\dots,x_N)>0
\qquad\text{for almost every }(x_1,\dots,x_N)\in \R^N.
\]
\end{enumerate}

Assumption (P) is not used in the sharp bound. It enters only when we
pass from a vector identity to a pointwise functional identity in the equality
analysis. This applies, for example, to the standard bosonic or distinguishable
positive-ground-state setting. For fermionic problems the bound still survives,
but the clean converse direction may require additional input because of nodal
sets. For general background on positivity-improving Schr\"odinger semigroups
and the operator framework for confining Schr\"odinger Hamiltonians, see
\cite{Hess1977Domination,Simon1979Kato,Simon1982SchrodingerSemigroups,ReedSimonIV}.
Direct nondegeneracy and invariant-cone criteria for Schr\"odinger
ground states are given in \cite{Faris1972InvariantCones,Goelden1977}.

\begin{remark}
The assumptions are formulated only at a level sufficient for the
commutator argument and the discrete active-gap decomposition. More singular
or non-compact settings would require a separate domain discussion and are not
considered here.
\end{remark}

\begin{remark}
The compact-resolvent assumption in \textup{(A1)} is made for presentation
clarity: it lets us write the spectral identities as sums over eigenstates,
which keeps the active-gap mechanism transparent. A more general formulation in
terms of the spectral measure of $H$ is possible, but we do not pursue that
level of abstraction here.
\end{remark}

Set
\[
M:=\sum_{i=1}^N m_i,
\qquad
X:=\frac1M\sum_{i=1}^N m_i x_i,
\qquad
P:=\sum_{i=1}^N p_i.
\]
By shifting the physical origin, we may and do assume
\[
\avg{X}=0.
\]
Then
\[
\Var_0(X)=\avg{X^2}.
\]

Let $\{\psi_n\}_{n\ge 0}$ be an orthonormal eigenbasis of $H$,
$H\psi_n=E_n\psi_n$, and define
\[
X_{n0}:=\ip{\psi_n}{X\psi_0}.
\]
The active gap of the observable $X$ is
\begin{equation}\label{eq:active-gap}
\Delta_X:=\min\bigl\{E_n-E_0:\ n\ge 1,\ X_{n0}\ne 0\bigr\}.
\end{equation}
Equivalently, $\Delta_X$ is the bottom of the spectrum of $H-E_0$ restricted to
the cyclic subspace generated by $X\psi_0$.
When $X\psi_0\neq 0$, this number is strictly positive in the confining regime.

\begin{remark}\label{rem:active-gap-interacting}
For example, consider two particles in common-frequency harmonic traps with a
nontrivial translation-invariant interaction $U(r)$, where
$r=x_1-x_2$ and $\mu=m_1m_2/M$. The Hamiltonian separates as
\[
H=\frac{P^2}{2M}+\frac12M\Omega^2X^2+H_{\mathrm{rel}},
\qquad
H_{\mathrm{rel}}
=\frac{p_r^2}{2\mu}+\frac12\mu\Omega^2r^2+U(r).
\]
Suppose that the first excitation energy $\delta$ of $H_{\mathrm{rel}}$
satisfies $\delta<\hbar\Omega$. Then the ordinary spectral gap is
$E_1-E_0=\delta$, whereas $X$ acts only in the center of mass sector and hence
\[
\Delta_X=\hbar\Omega.
\]
Thus a rigid center-of-mass probe does not see the lower relative excitation.
For instance, $U$ may be chosen so that the effective relative potential
$\frac12\mu\Omega^2r^2+U(r)$ is a confining symmetric double well with a
first tunnelling splitting below $\hbar\Omega$.
\end{remark}
\section{Universal center-of-mass commutators}

\begin{proposition}[Center-of-mass TRK identities]\label{prop:commutators}
For the Hamiltonian \eqref{eq:H},
\begin{equation}\label{eq:first-comm}
[H,X]=-\frac{i\hbar}{M}P,
\end{equation}
and therefore
\begin{equation}\label{eq:double-comm}
[X,[H,X]]=\frac{\hbar^2}{M}.
\end{equation}
Consequently,
\begin{equation}\label{eq:trx}
\sum_{n\ge 1}(E_n-E_0)|X_{n0}|^2=\frac{\hbar^2}{2M}.
\end{equation}
\end{proposition}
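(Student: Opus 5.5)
The plan is to compute the commutators directly on the common core $\cD$ from (A3), then convert the double commutator into a sum rule by taking a ground-state expectation and inserting the eigenbasis from (A1).

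\textbf{First commutator.} Write $H=T+U+\sum_i V_i(x_i)$ with $T=\sum_i p_i^2/(2m_i)$. Every potential term is a multiplication operator in the positions, so it commutes with $X=\frac1M\sum_j m_j x_j$. This holds for $U$ and for each $V_i(x_i)$ alike, so translation invariance is not yet needed. Only the kinetic part contributes. Since $[p_i^2,x_j]=-2i\hbar\,\delta_{ij}p_i$, we get
\[
[T,X]=\sum_i \frac{1}{2m_i}\cdot\frac{m_i}{M}(-2i\hbar p_i)=-\frac{i\hbar}{M}P .
\]
This is \eqref{eq:first-comm}, which (A3) anyway postulates as an identity on $\cD$. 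Then $[X,P]=\sum_{i,j}\frac{m_j}{M}[x_j,p_i]=\sum_j \frac{m_j}{M}\,i\hbar=i\hbar$. Hence $[X,[H,X]]=-\frac{i\hbar}{M}[X,P]=\frac{\hbar^2}{M}$, which is \eqref{eq:double-comm}.

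\textbf{Sum rule.} Take the expectation of \eqref{eq:double-comm} in $\psi_0$. Because $\psi_0\in\cD$ and $X\psi_0\in\operatorname{Dom}(H)$, we may expand
\[
\ip{\psi_0}{[X,[H,X]]\psi_0}=2\ip{X\psi_0}{(H-E_0)X\psi_0},
\]
using self-adjointness of $X$ and $H$ and $H\psi_0=E_0\psi_0$. Concretely, $XHX-XXH-HXX+XHX$ has expectation $2\ip{X\psi_0}{HX\psi_0}-2E_0\ip{X\psi_0}{X\psi_0}$.

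Next expand $X\psi_0=\sum_n X_{n0}\psi_n$ in the complete orthonormal eigenbasis. Since $X\psi_0\in\operatorname{Dom}(H)$, we have $\ip{X\psi_0}{(H-E_0)X\psi_0}=\sum_n (E_n-E_0)|X_{n0}|^2$ by the spectral theorem for the discrete spectrum. The $n=0$ term vanishes, and all terms are nonnegative, so the series converges absolutely. Equating with $\hbar^2/M$ gives \eqref{eq:trx}.

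\textbf{Main obstacle.} The only delicate point is justifying the manipulations with unbounded operators: that the double commutator may be evaluated against $\psi_0$ and rearranged, and that the quadratic form of $H$ on $X\psi_0$ equals the eigen-sum. The rearrangement needs $\psi_0$, $X\psi_0$ and $HX\psi_0$ to be well-defined, which (A3) guarantees. I would handle it by writing every inner product explicitly in terms of vectors lying in $\cD$ or $\operatorname{Dom}(H)$, and by moving $X$ and $H$ across inner products only via their symmetry on these domains.
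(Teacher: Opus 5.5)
Your proof is correct and follows the paper's argument. You use the same kinetic-only computation of $[H,X]$, the same evaluation $[X,P]=i\hbar$, and the same ground-state expectation of the double commutator expanded in the eigenbasis. The only difference is that you derive the ``standard spectral identity'' $\tfrac12\avg{[X,[H,X]]}=\ip{X\psi_0}{(H-E_0)X\psi_0}$ explicitly from the domain conditions in (A3), where the paper simply cites it.
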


\begin{proof}
Since $X$ is multiplication by
\[
X=\frac1M\sum_{j=1}^N m_j x_j,
\]
we have $[U,X]=0$ and $[V_i(x_i),X]=0$.
Hence only the kinetic part contributes:
\[
[H,X]=\sum_{i=1}^N \left[\frac{p_i^2}{2m_i},X\right].
\]
Now
\[
\left[\frac{p_i^2}{2m_i},X\right]
=\frac1M\sum_{j=1}^N m_j\left[\frac{p_i^2}{2m_i},x_j\right].
\]
If $j\ne i$, the commutator vanishes. For $j=i$,
\[
\left[\frac{p_i^2}{2m_i},x_i\right]=-\frac{i\hbar}{m_i}p_i.
\]
Therefore
\[
\left[\frac{p_i^2}{2m_i},X\right]
=\frac{m_i}{M}\left(-\frac{i\hbar}{m_i}p_i\right)
=-\frac{i\hbar}{M}p_i,
\]
and summing over $i$ gives \eqref{eq:first-comm}.

The second identity is immediate:
\[
[X,[H,X]]
=-\frac{i\hbar}{M}[X,P].
\]
Since
\[
[X,P]
=\left[\frac1M\sum_{i=1}^N m_i x_i,\sum_{j=1}^N p_j\right]
=\frac1M\sum_{i,j}m_i[x_i,p_j]
=\frac1M\sum_{i=1}^N m_i\,i\hbar
=i\hbar,
\]
we obtain \eqref{eq:double-comm}.

Finally, the standard spectral identity gives
\[
\frac12\avg{[X,[H,X]]}
=\sum_{n\ge 1}(E_n-E_0)|X_{n0}|^2.
\]
Using \eqref{eq:double-comm} yields \eqref{eq:trx}.
\end{proof}

\section{General sharp center-of-mass theorem}

The bound below depends only on the universal center-of-mass commutator
algebra and the active-gap definition; no positivity assumption is used.

\begin{theorem}[General sharp center-of-mass bound]\label{thm:bound}
Under the assumptions above,
\begin{equation}\label{eq:sharp-bound}
\Var_0(X)\le \frac{\hbar^2}{2M\Delta_X}.
\end{equation}
Moreover, equality holds if and only if
\begin{equation}\label{eq:equality-subspace}
X\psi_0\in \Ran P_{E_0+\Delta_X},
\end{equation}
equivalently, if and only if all active spectral weight of $X\psi_0$ sits at the
single energy $E_0+\Delta_X$.
\end{theorem}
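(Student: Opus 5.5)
The plan is to expand $X\psi_0$ in the eigenbasis $\{\psi_n\}$ and compare the variance with the TRK sum rule \eqref{eq:trx} of Proposition~\ref{prop:commutators}, term by term.

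Since $\avg{X}=0$, we have $X_{00}=0$. Parseval in the complete orthonormal basis of (A1) then gives
\[
\Var_0(X)=\|X\psi_0\|^2=\sum_{n\ge 1}|X_{n0}|^2 .
\]
By the definition \eqref{eq:active-gap} of $\Delta_X$, every index $n\ge 1$ with $X_{n0}\neq 0$ satisfies $E_n-E_0\ge \Delta_X$. Hence
\[
\Delta_X\,\Var_0(X)=\sum_{n\ge1}\Delta_X|X_{n0}|^2\le \sum_{n\ge1}(E_n-E_0)|X_{n0}|^2=\frac{\hbar^2}{2M}.
\]
Dividing by $\Delta_X>0$ yields \eqref{eq:sharp-bound}. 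We may assume $X\psi_0\neq0$, since otherwise the bound is trivial; then $\Delta_X$ is well defined and positive. It is attained as a minimum because the spectrum is discrete with $E_n\to\infty$.

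For the equality statement, I will rewrite the deficit as a sum of nonnegative terms:
\[
\frac{\hbar^2}{2M}-\Delta_X\Var_0(X)=\sum_{n\ge1}(E_n-E_0-\Delta_X)|X_{n0}|^2 .
\]
Every term vanishes except those with $E_n-E_0>\Delta_X$, and these are strictly positive whenever $X_{n0}\neq 0$. So equality holds if and only if $X_{n0}=0$ for all $n$ with $E_n\neq E_0+\Delta_X$. This says exactly that $X\psi_0$ lies in the eigenspace $\Ran P_{E_0+\Delta_X}$, where $P_{E_0+\Delta_X}$ is the spectral projection onto that energy. This is \eqref{eq:equality-subspace} and the statement about active spectral weight. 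The argument never uses positivity of $\psi_0$.

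The only step needing care is justifying the sum rule \eqref{eq:trx} as a convergent identity rather than formally. Proposition~\ref{prop:commutators} has already done this, but I would recall why it is legitimate. Because $X\psi_0\in\operatorname{Dom}(H)$ by (A3), the quantity
\[
\ip{X\psi_0}{(H-E_0)X\psi_0}=\sum_{n\ge1}(E_n-E_0)|X_{n0}|^2
\]
is finite, and it equals $\frac12\avg{[X,[H,X]]}$ by expanding the double commutator on the core $\cD$ and using $H\psi_0=E_0\psi_0$. With this in hand, the rest is the elementary termwise comparison above.
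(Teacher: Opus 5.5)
Your proposal is correct and follows essentially the same route as the paper. Both write $\Var_0(X)=\sum_{n\ge1}|X_{n0}|^2$, compare termwise using $E_n-E_0\ge\Delta_X$ on the active indices together with the TRK sum rule \eqref{eq:trx}, and characterize equality as the absence of active weight above $E_0+\Delta_X$. Your extra remarks on the well-definedness of $\Delta_X$ and on justifying \eqref{eq:trx} via $X\psi_0\in\operatorname{Dom}(H)$ are sound; the paper leaves these points implicit.
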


\begin{proof}
Because $\avg{X}=0$,
\[
\Var_0(X)=\avg{X^2}=\sum_{n\ge 1}|X_{n0}|^2.
\]
By the definition of $\Delta_X$,
\[
E_n-E_0\ge \Delta_X
\qquad
\text{whenever }
X_{n0}\ne 0.
\]
Hence
\[
\Var_0(X)
=\sum_{n\ge 1}|X_{n0}|^2
\le \frac1{\Delta_X}\sum_{n\ge 1}(E_n-E_0)|X_{n0}|^2.
\]
Using \eqref{eq:trx} proves \eqref{eq:sharp-bound}.

Equality occurs precisely when every term with $X_{n0}\ne 0$ satisfies
$E_n-E_0=\Delta_X$, which is exactly \eqref{eq:equality-subspace}.
\end{proof}

\begin{remark}
Since $\Delta_X\ge E_1-E_0$, the weaker but more elementary estimate
\[
\Var_0(X)\le \frac{\hbar^2}{2M(E_1-E_0)}
\]
also follows. The point of \eqref{eq:sharp-bound} is that the active gap is the
correct one for the observable $X$.
\end{remark}

\section{Positivity-class equality theory}

We next turn to the converse direction. From this point on, the additional
positivity assumption \textup{(P)} is essential. It is what converts the vector
equality condition into a pointwise identity for the external forcing profile.

The next commutator is the many-body analogue of the one-particle identity
$[H,[H,x]]=(\hbar^2/m)V'(x)$.

\begin{proposition}[Second double commutator]\label{prop:second-double}
One has
\begin{equation}\label{eq:HHX}
[H,[H,X]]=\frac{\hbar^2}{M}\sum_{i=1}^N V_i'(x_i).
\end{equation}
\end{proposition}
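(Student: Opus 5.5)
We plan to build on the first commutator from Proposition~\ref{prop:commutators}, $[H,X]=-\frac{i\hbar}{M}P$, which holds on the common core $\cD$ by (A3). Linearity of the commutator in its second argument then gives
\[
[H,[H,X]]=-\frac{i\hbar}{M}[H,P].
\]
So the whole statement reduces to computing $[H,P]$, where $P=\sum_j p_j$ is the total momentum.

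We split $H$ into its three pieces: kinetic energy, interaction $U$, and traps $\sum_i V_i(x_i)$.

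\textbf{Kinetic part.} Each $p_i^2/(2m_i)$ commutes with every $p_j$, so this term contributes nothing.

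\textbf{Interaction.} For a multiplication operator $f$ one has $[f,p_j]=i\hbar\,\partial_{x_j}f$, hence
\[
[U,P]=i\hbar\sum_j\partial_{x_j}U .
\]
This vanishes by translation invariance (A2): $U$ depends only on the differences $x_k-x_l$. Differentiating the identity $U(x_1+a,\dots,x_N+a)=U(x_1,\dots,x_N)$ in $a$ at $a=0$ gives $\sum_j\partial_{x_j}U=0$.

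\textbf{Traps.} Since $V_i(x_i)$ depends on $x_i$ only, $[V_i(x_i),p_j]=i\hbar\,\delta_{ij}V_i'(x_i)$. Summing over $i$ and $j$ yields
\[
[H,P]=i\hbar\sum_i V_i'(x_i),
\]
which is precisely the third identity postulated in (A3). Substituting back gives
\[
[H,[H,X]]=-\frac{i\hbar}{M}\cdot i\hbar\sum_i V_i'(x_i)=\frac{\hbar^2}{M}\sum_i V_i'(x_i).
\]

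The only delicate point is the interaction term. For a differentiable $U$ the argument above is complete. If $U$ is only measurable, the translation-invariance cancellation should instead be read as the statement that $U$ commutes with the unitary translation group $e^{-iaP/\hbar}$, so that $[U,P]=0$ in the weak sense on $\cD$. In any case, (A3) already packages the needed identity $[H,P]=i\hbar\sum_i V_i'(x_i)$ on $\cD$, so the proof can simply invoke it. The computation above then serves to justify why that assumption is natural under (A2).

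We must also check that the iterated commutator makes sense on $\cD$. This follows because $\cD$ is invariant under $H$, $X$ and $P$, so both $H[H,X]\psi$ and $[H,X]H\psi$ are defined for $\psi\in\cD$.
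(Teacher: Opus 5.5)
Your proposal is correct and follows the paper's proof essentially step for step. Like the paper, you reduce to $[H,P]$ via $[H,X]=-\frac{i\hbar}{M}P$, then note that the kinetic part commutes with $P$, that $[U,P]=i\hbar\sum_k\partial_{x_k}U=0$ by translation invariance, and that $[V_i(x_i),P]=i\hbar V_i'(x_i)$; your added remarks on the common core $\cD$ and on (A3) already encoding $[H,P]$ are harmless supplements.
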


\begin{proof}
From \eqref{eq:first-comm},
\[
[H,[H,X]]=-\frac{i\hbar}{M}[H,P].
\]
Now
\[
[H,P]
=\left[\sum_{i=1}^N \frac{p_i^2}{2m_i},P\right]+[U,P]+\sum_{i=1}^N [V_i(x_i),P].
\]
The kinetic commutator vanishes because all momenta commute. Also,
\[
[U,P]=i\hbar \sum_{k=1}^N \partial_{x_k}U=0,
\]
since $U$ depends only on differences $x_i-x_j$.
Finally,
\[
[V_i(x_i),P]=[V_i(x_i),p_i]=i\hbar V_i'(x_i).
\]
Therefore
\[
[H,P]=i\hbar\sum_{i=1}^N V_i'(x_i),
\]
and \eqref{eq:HHX} follows.
\end{proof}

\begin{theorem}[Positivity-class necessity of common-frequency harmonic traps]\label{thm:necessity}
Assume \textup{(P)}.
If equality holds in \eqref{eq:sharp-bound}, then with
\[
\Omega:=\frac{\Delta_X}{\hbar}
\]
there exist constants $a_i,C_i\in \R$ such that
\begin{equation}\label{eq:harmonic-form}
V_i(x)=\frac12 m_i\Omega^2(x-a_i)^2+C_i,
\qquad i=1,\dots,N,
\end{equation}
and
\begin{equation}\label{eq:centering-condition}
\sum_{i=1}^N m_i a_i=0.
\end{equation}
Equivalently,
\[
V_i'(x)=m_i\Omega^2 x+c_i,
\qquad
\sum_{i=1}^N c_i=0.
\]
\end{theorem}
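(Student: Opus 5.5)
By Theorem~\ref{thm:bound}, equality in \eqref{eq:sharp-bound} means $X\psi_0\in\Ran P_{E_0+\Delta_X}$. The plan is to convert this into the eigenvector identity $(H-E_0)X\psi_0=\Delta_X X\psi_0$, which makes sense because $X\psi_0\in\operatorname{Dom}(H)$ by (A3). Equivalently, $[H,X]\psi_0=\Delta_X X\psi_0$. We then apply $H-E_0$ once more and obtain
\[
[H,[H,X]]\psi_0=(H-E_0)[H,X]\psi_0=\Delta_X^2 X\psi_0=\hbar^2\Omega^2 X\psi_0 .
\]
Some care with domains is needed here. I would instead compute the matrix elements $\ip{\psi_n}{[H,[H,X]]\psi_0}=(E_n-E_0)^2X_{n0}$ against the eigenbasis, using the core $\cD$ and the commutator identities in (A3). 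This shows that the two vectors $[H,[H,X]]\psi_0$ and $\Delta_X^2X\psi_0$ have identical coefficients, so they are equal.

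Next we insert Proposition~\ref{prop:second-double}. It gives
\[
\frac{\hbar^2}{M}\Bigl(\sum_i V_i'(x_i)\Bigr)\psi_0=\hbar^2\Omega^2 X\psi_0
=\frac{\hbar^2\Omega^2}{M}\Bigl(\sum_i m_ix_i\Bigr)\psi_0,
\]
which says that $G\,\psi_0=0$ in $L^2$, where $G=\sum_iV_i'(x_i)-\Omega^2\sum_im_ix_i$ is the defect from the introduction. This is the step where (P) enters. Because $\psi_0>0$ almost everywhere, we get $G=0$ almost everywhere, and since $G$ is continuous ($V_i\in C^1$), $G\equiv0$ on $\R^N$.

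It remains to solve $\sum_i f_i(x_i)=0$ with $f_i(x)=V_i'(x)-m_i\Omega^2x$. Fixing every variable except $x_i$ shows that each $f_i$ is constant, say $f_i=c_i$, and then $\sum_ic_i=0$. Integrating gives $V_i(x)=\tfrac12m_i\Omega^2x^2+c_ix+\text{const}$. Completing the square with $a_i=-c_i/(m_i\Omega^2)$ (note $\Omega>0$, since $\Delta_X>0$ whenever $X\psi_0\neq0$, which holds by (P)) yields \eqref{eq:harmonic-form}. The condition $\sum_ic_i=0$ translates to $\sum_im_ia_i=0$, which is \eqref{eq:centering-condition}.

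The main obstacle is the first step, namely justifying rigorously that $[H,[H,X]]\psi_0=\Delta_X^2X\psi_0$ as vectors. This requires $X\psi_0$ and $[H,X]\psi_0=-\tfrac{i\hbar}{M}P\psi_0$ to lie in suitable domains. The cleanest route seems to be the weak formulation: pair both sides with each $\psi_n$, move $H$ across by self-adjointness, and conclude by completeness of $\{\psi_n\}$. The identity $\sum_iV_i'(x_i)\psi_0\in L^2$ then comes out as part of the argument.
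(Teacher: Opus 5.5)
Your proposal is correct and follows the paper's proof essentially step for step. Equality places $X\psi_0$ in the first active shell, so $[H,[H,X]]\psi_0=\Delta_X^2X\psi_0$; Proposition~\ref{prop:second-double} turns this into $G\psi_0=0$; (P) forces $G=0$ a.e.; and separating variables and integrating gives the harmonic form. Your other changes are minor refinements of the same argument: the weak-formulation treatment of domains, using continuity of $G$ on $\R^N$ before separating variables instead of the paper's Fubini argument, and the explicit check that $\Omega>0$.
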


The positivity assumption is part of the equality theorem, not of the
bound. Thus the estimate remains valid for fermionic or nodal states whenever
the spectral assumptions hold, but the converse implication from saturation to
pointwise harmonicity is not asserted in that generality, because the argument
requires division by $\psi_0$ almost everywhere.

\begin{proof}
If equality holds in Theorem \ref{thm:bound}, then
$X\psi_0\in \Ran P_{E_0+\Delta_X}$.
Since $\avg{X}=0$, the vector $X\psi_0$ has no ground-state component.
Therefore, if equality holds and all spectral weight of $X\psi_0$ lies in the
shell $E_0+\Delta_X$, then
\[
(H-E_0)^2X\psi_0=\Delta_X^2 X\psi_0.
\]
But
\[
(H-E_0)X\psi_0=[H,X]\psi_0,
\]
hence
\[
(H-E_0)^2X\psi_0=[H,[H,X]]\psi_0.
\]
Using \eqref{eq:HHX}, we obtain
\[
\frac{\hbar^2}{M}\left(\sum_{i=1}^N V_i'(x_i)\right)\psi_0
=\Delta_X^2 X\psi_0
=\frac{\Delta_X^2}{M}\left(\sum_{i=1}^N m_i x_i\right)\psi_0.
\]

Since $\psi_0>0$ almost everywhere by assumption \textup{(P)}, the zero
set of $\psi_0$ has measure zero, and the coefficient multiplying $\psi_0$ must
vanish almost everywhere. Hence

\[
\sum_{i=1}^N \bigl(V_i'(x_i)-m_i\Omega^2 x_i\bigr)=0
\qquad\text{a.e. on }\R^N.
\]
Set
\[
g_i(x):=V_i'(x)-m_i\Omega^2 x.
\]
Then
\[
\sum_{i=1}^N g_i(x_i)=0
\qquad
\text{for a.e. }(x_1,\dots,x_N)\in\R^N.
\]
Fix $i$.
By Fubini's theorem, for almost every choice of the remaining variables
\[
(x_1,\dots,x_{i-1},x_{i+1},\dots,x_N),
\]
the above identity holds for almost every $x_i$.
Hence, for such a choice,
\[
g_i(x_i)=-\sum_{j\neq i} g_j(x_j)
\]
for almost every $x_i$, and the right-hand side is independent of $x_i$.
Therefore $g_i$ is almost everywhere equal to a constant.
Since $V_i\in C^1(\R)$, the function $g_i$ is continuous, so it is in fact
constant everywhere on $\R$.
Thus
\[
V_i'(x)=m_i\Omega^2 x+c_i
\]
for some constants $c_i$, and summing over $i$ shows $\sum_i c_i=0$.
Integrating gives
\[
V_i(x)=\frac12 m_i\Omega^2 x^2+c_i x+C_i^{\ast}
=\frac12 m_i\Omega^2 (x-a_i)^2+C_i^{\ast},
\qquad
a_i:=-\frac{c_i}{m_i\Omega^2}.
\]
The condition $\sum_i c_i=0$ becomes $\sum_i m_i a_i=0$, which is
\eqref{eq:centering-condition}.
\end{proof}

\begin{remark}
This is the only point where positivity is used.
\end{remark}

\section{Sufficiency and exact separation}

\begin{theorem}[Sufficiency via exact center-of-mass separation]\label{thm:sufficiency}
Suppose that for some $\Omega>0$ and some constants $a_i,C_i$,
\[
V_i(x)=\frac12 m_i\Omega^2(x-a_i)^2+C_i,
\qquad
\sum_{i=1}^N m_i a_i=0.
\]
Then the Hamiltonian separates as
\[
H=H_{\mathrm{COM}}+H_{\mathrm{rel}},
\]
where
\[
H_{\mathrm{COM}}=\frac{P^2}{2M}+\frac12 M\Omega^2 X^2.
\]
In particular,
\[
\Delta_X=\hbar\Omega,
\qquad
\Var_0(X)=\frac{\hbar}{2M\Omega}=\frac{\hbar^2}{2M\Delta_X},
\]
so the bound \eqref{eq:sharp-bound} is saturated.
\end{theorem}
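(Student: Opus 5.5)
The plan is to write the external potential in center-of-mass and relative coordinates and then read the variance off the resulting oscillator. First I would expand the one-body traps:
\[
\sum_i V_i(x_i)=\frac12\Omega^2\sum_i m_i x_i^2-\Omega^2\sum_i m_i a_i x_i+\text{const}.
\]
The linear term $\sum_i m_i a_i x_i$ does not vanish in general. I would write $x_i=X+y_i$ with $\sum_i m_i y_i=0$. Then $\sum_i m_i x_i^2=MX^2+\sum_i m_i y_i^2$, and
\[
\sum_i m_i a_i x_i=\Big(\sum_i m_i a_i\Big)X+\sum_i m_i a_i y_i=\sum_i m_i a_i y_i .
\]
Here the centering condition $\sum_i m_i a_i=0$ removes the $X$-dependence, so the linear term is a function of relative coordinates only. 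The kinetic energy splits in the standard way as $\sum_i p_i^2/2m_i=P^2/2M+T_{\mathrm{rel}}$, using Jacobi coordinates (a linear change of variables with constant Jacobian). By (A2), $U$ depends only on the $y_i$. This gives
\[
H=H_{\mathrm{COM}}+H_{\mathrm{rel}},
\]
where $H_{\mathrm{rel}}$ acts on $L^2$ of the $N-1$ relative coordinates and collects $T_{\mathrm{rel}}$, $U$, $\frac12\Omega^2\sum_i m_i y_i^2$, the linear relative term, and the constants. The two pieces act on different tensor factors of $L^2(\R)\otimes L^2(\R^{N-1})\cong L^2(\R^N)$, so they commute.

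Next I would describe the spectrum. The spectrum of $H_{\mathrm{COM}}$ is $\hbar\Omega(k+\tfrac12)$ with Hermite eigenfunctions $\phi_k$. By (A1), $H$ has compact resolvent, so $H_{\mathrm{rel}}$ does too, with eigenpairs $(\epsilon_m,\chi_m)$. The products $\phi_k\otimes\chi_m$ then form an orthonormal eigenbasis of $H$ with energies $\hbar\Omega(k+\tfrac12)+\epsilon_m$. The ground state is $\phi_0\otimes\chi_0$, possibly with $\chi_0$ chosen from a degenerate level. This causes no harm, because $X$ acts only on the first factor and the $X$-matrix elements below do not involve $\chi_0$ beyond $\langle\chi_m,\chi_0\rangle=\delta_{m0}$. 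Strictly, (A1) requires $E_0<E_1$; I would just use the given $\psi_0$ and note that the separation forces $\psi_0=\phi_0\otimes\chi_0$ for some relative ground state $\chi_0$.

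For the active gap, the matrix elements are
\[
\ip{\phi_k\otimes\chi_m}{X\,\phi_0\otimes\chi_0}=\ip{\phi_k}{X\phi_0}\,\delta_{m0}.
\]
The ladder-operator identity $X=\sqrt{\hbar/2M\Omega}\,(a+a^\dagger)$ gives $X\phi_0=\sqrt{\hbar/2M\Omega}\,\phi_1$. So $X\psi_0=\sqrt{\hbar/2M\Omega}\,\phi_1\otimes\chi_0$, which is a single eigenvector with energy $E_0+\hbar\Omega$. Hence $\Delta_X=\hbar\Omega$, $\avg{X}=0$, and $\Var_0(X)=\hbar/(2M\Omega)$. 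This equals $\hbar^2/(2M\Delta_X)$, which is saturation. It is also consistent with the equality criterion of Theorem~\ref{thm:bound}, since all of $X\psi_0$ lies in $\Ran P_{E_0+\Delta_X}$.

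The main obstacle is making the separation rigorous rather than formal. One must check that the unitary change to Jacobi coordinates identifies $H$, as a self-adjoint operator, with the closure of $H_{\mathrm{COM}}\otimes 1+1\otimes H_{\mathrm{rel}}$ on the algebraic tensor product of cores. I would handle this by working on the Schwartz core $\mathcal S(\R^N)$ from the sufficient class, which the linear change of variables preserves, and then invoking the standard result that a sum of commuting self-adjoint operators on tensor factors is essentially self-adjoint on the product of cores.
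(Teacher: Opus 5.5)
Your proposal is correct and follows the paper's proof step for step. The common steps are the decomposition $x_i=X+y_i$ with the centering condition removing the $X$-dependence of the cross term (the paper writes this as $\sum_i m_i(x_i-a_i)^2=MX^2+\sum_i m_i(y_i-a_i)^2$), the Jacobi splitting of the kinetic energy, the factorization $\psi_0=\phi_0\otimes\chi_0$, and $X\phi_0\propto\phi_1$; your explicit ladder-operator constant, the degenerate-$\chi_0$ remark, and the tensor-product self-adjointness sketch only spell out details the paper leaves implicit.
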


\begin{proof}
Write
\[
x_i=X+y_i,
\qquad
\sum_{i=1}^N m_i y_i=0.
\]
Then
\[
\sum_{i=1}^N m_i(x_i-a_i)^2
=\sum_{i=1}^N m_i(X+y_i-a_i)^2
=MX^2+\sum_{i=1}^N m_i(y_i-a_i)^2,
\]
because both $\sum_i m_i y_i$ and $\sum_i m_i a_i$ vanish.
In addition, the kinetic energy admits the standard center-of-mass/relative
decomposition, for instance in Jacobi coordinates,
\[
\sum_{i=1}^N \frac{p_i^2}{2m_i}
=\frac{P^2}{2M}+T_{\mathrm{rel}},
\]
where $T_{\mathrm{rel}}$ depends only on the relative variables and their
conjugate momenta.
Therefore
\[
\sum_{i=1}^N V_i(x_i)
=\frac12 M\Omega^2 X^2+W_{\mathrm{rel}}(y_1,\dots,y_N)+C.
\]
Since $U$ depends only on the differences $x_i-x_j=y_i-y_j$, the full
Hamiltonian splits into
\[
H_{\mathrm{COM}}=\frac{P^2}{2M}+\frac12 M\Omega^2 X^2
\]
plus a relative Hamiltonian $H_{\mathrm{rel}}$.

Hence the ground state factorizes,
\[
\psi_0(X,\mathrm{rel})=\phi_0(X)\chi_0(\mathrm{rel}),
\]
where $\phi_0$ is the ground state of the one-dimensional harmonic oscillator
$H_{\mathrm{COM}}$.
Since the operator $X$ acts only on the center-of-mass coordinate, it acts
trivially on the relative factor, and therefore
\[
X\psi_0=(X\phi_0)\chi_0.
\]
For a one-dimensional harmonic oscillator,
\[
X\phi_0\in \Ran P^{\mathrm{COM}}_{\hbar\Omega},
\]
indeed $X\phi_0$ is proportional to the first excited center-of-mass mode.
Therefore the active gap for $X$ is exactly $\Delta_X=\hbar\Omega$, and
\[
\Var_0(X)=\frac{\hbar}{2M\Omega}
=\frac{\hbar^2}{2M\Delta_X}.
\]
\end{proof}

\begin{remark}
The sufficiency mechanism is exactly the center-of-mass decoupling familiar from
Kohn-type theorems. Once all trap curvatures coincide, interactions disappear
from the center-of-mass sector and the rigid mode becomes a pure harmonic
oscillator with frequency $\Omega$.
\end{remark}

Combining Theorems \ref{thm:necessity} and \ref{thm:sufficiency}, we obtain the
clean many-body equality statement.

\begin{theorem}[Positivity-class converse and equality characterization]\label{thm:iff}
Under the standing assumptions, the sharp bound
\[
\Var_0(X)\le \frac{\hbar^2}{2M\Delta_X}
\]
always holds. If, in addition, the ground state is strictly positive almost
everywhere, that is, if \textup{(P)} holds, then equality occurs if and only if,
after a global translation of the origin, the one-body traps are harmonic with
a common frequency,
\[
V_i(x)=\frac12 m_i\Omega^2(x-a_i)^2+C_i,
\qquad
\sum_{i=1}^N m_i a_i=0,
\qquad
\Omega=\frac{\Delta_X}{\hbar}.
\]
The interaction $U$ is otherwise arbitrary, subject only to translation
invariance.
\end{theorem}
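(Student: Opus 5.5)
The statement has three parts, and each follows by assembling results already proved. The plan is to take them in turn and then deal with the one subtle bookkeeping point, which is the normalization of the origin.

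First, the inequality $\Var_0(X)\le \hbar^2/(2M\Delta_X)$ is exactly Theorem~\ref{thm:bound}. It uses only (A1)--(A3), through the TRK sum rule \eqref{eq:trx} of Proposition~\ref{prop:commutators} and the definition \eqref{eq:active-gap} of $\Delta_X$. So the first sentence of the statement needs no new argument, and in particular no positivity.

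Second, assume (P) and suppose equality holds. Theorem~\ref{thm:necessity} applies directly. It gives $V_i(x)=\tfrac12 m_i\Omega^2(x-a_i)^2+C_i$ with $\Omega=\Delta_X/\hbar$ and $\sum_i m_i a_i=0$. The centering condition is the same as the normalization $\avg{X}=0$ fixed in Section~2 by shifting the origin; this is why the statement says ``after a global translation of the origin''.

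Third, for the converse, assume the traps have this common-frequency form. Theorem~\ref{thm:sufficiency} then gives the exact separation $H=H_{\mathrm{COM}}+H_{\mathrm{rel}}$, the value $\Delta_X=\hbar\Omega$, and $\Var_0(X)=\hbar^2/(2M\Delta_X)$, so equality holds. The frequency $\Omega$ is not a free parameter: it is forced to equal $\Delta_X/\hbar$, which matches the relation stated in the theorem. Since (A2) is the only property of $U$ used in Propositions~\ref{prop:commutators} and~\ref{prop:second-double} and in the separation argument, $U$ remains arbitrary within the translation-invariant class.

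The only step needing care is the consistency of the origin normalization in the converse direction. Suppose we start from traps with arbitrary centers $a_i$. A global translation $x_i\mapsto x_i+s$ replaces $a_i$ by $a_i-s$. Choosing $s=\sum_i m_i a_i/M$ achieves $\sum_i m_i a_i=0$. We then check that this is compatible with $\avg{X}=0$. In the separated form the center-of-mass potential is $\tfrac12 M\Omega^2 X^2$ plus a constant, so $\phi_0$ is centered at $X=0$ and $\avg{X}=0$ holds automatically.

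Neither $\Var_0(X)$ nor $\Delta_X$ changes under a translation. Translation is a unitary conjugation that shifts $X$ by a constant, and the matrix elements $X_{n0}$ with $n\ge1$ are unaffected by such a constant shift. Hence both the bound and the equality condition are translation invariant, and the characterization holds as stated.
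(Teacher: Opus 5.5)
Your proposal is correct and follows the paper's route: the paper proves Theorem~\ref{thm:iff} by citing Theorem~\ref{thm:bound} for the bound and combining Theorems~\ref{thm:necessity} and~\ref{thm:sufficiency} for the two directions of the equality characterization. Your extra bookkeeping is also correct, and the paper leaves it implicit: the centering condition $\sum_i m_i a_i=0$ is equivalent to the normalization $\avg{X}=0$, and $\Var_0(X)$ and $\Delta_X$ are invariant under global translations.
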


\section{Quantitative near-saturation theory}

We finish with two elementary quantitative consequences of the same
spectral decomposition. The first one relates the variance deficit to the
spectral weight of $X\psi_0$ outside the first active shell. The second one
connects this tail to the ground-state density-weighted forcing defect $G$.

\subsection*{Variance deficit and concentration in the active shell}

Let
\[
Q_X:=P_{E_0+\Delta_X}
\]
be the spectral projection onto the first active shell. Define the variance
deficit
\begin{equation}\label{eq:deficit}
\cD_X:=\frac{\hbar^2}{2M\Delta_X}-\Var_0(X)\ge 0.
\end{equation}
To measure the separation between the first active shell and the rest of the
active spectrum, set
\begin{equation}\label{eq:GammaX}
\Gamma_X:=
\inf\bigl\{(E_n-E_0)-\Delta_X:\ n\ge 1,\ E_n-E_0>\Delta_X,\ X_{n0}\ne 0\bigr\},
\end{equation}
with the convention $\Gamma_X=+\infty$ if no such index exists. In that case
all higher active-shell tails vanish identically, and the bounds below are read
with $1/\Gamma_X=0$.

\begin{proposition}[Deficit identity and active-shell concentration]
\label{prop:deficit}
One has the exact identity
\begin{equation}\label{eq:deficit-identity}
\cD_X
=\frac1{\Delta_X}\sum_{n\ge 1}\bigl((E_n-E_0)-\Delta_X\bigr)|X_{n0}|^2.
\end{equation}
Consequently,
\begin{equation}\label{eq:spectral-tail-deficit}
\|(I-Q_X)X\psi_0\|^2
=\sum_{\substack{n\ge 1\\ E_n-E_0>\Delta_X}} |X_{n0}|^2
\le \frac{\Delta_X}{\Gamma_X}\,\cD_X.
\end{equation}
If the first active shell is one-dimensional, say
$Q_X=\ip{\phi_X}{\cdot}\phi_X$ with $\|\phi_X\|=1$, then
\begin{equation}\label{eq:one-dimensional-active-shell}
1-\left|\left\langle \phi_X,\frac{X\psi_0}{\|X\psi_0\|}\right\rangle\right|^2
\le
\frac{\Delta_X}{\Gamma_X}\,\frac{\cD_X}{\Var_0(X)}.
\end{equation}
\end{proposition}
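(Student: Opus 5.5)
The plan is to derive everything from the two spectral sums already available. Because $\avg{X}=0$, the vector $X\psi_0$ has no ground-state component, so the proof of Theorem~\ref{thm:bound} gives $\Var_0(X)=\sum_{n\ge 1}|X_{n0}|^2$. Proposition~\ref{prop:commutators} gives the TRK sum \eqref{eq:trx}. Dividing \eqref{eq:trx} by $\Delta_X$ and subtracting the variance series term by term yields
\[
\cD_X=\frac{1}{\Delta_X}\sum_{n\ge1}(E_n-E_0)|X_{n0}|^2-\sum_{n\ge1}|X_{n0}|^2
=\frac1{\Delta_X}\sum_{n\ge1}\bigl((E_n-E_0)-\Delta_X\bigr)|X_{n0}|^2 ,
\]
which is \eqref{eq:deficit-identity}. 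The subtraction is legitimate because both series converge absolutely: the first equals $\hbar^2/(2M)$, and the second equals $\|X\psi_0\|^2<\infty$.

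For the tail estimate \eqref{eq:spectral-tail-deficit}, I will sort the terms of the deficit identity by energy. If $X_{n0}=0$, the term vanishes. If $X_{n0}\neq0$, the definition \eqref{eq:active-gap} gives $E_n-E_0\ge\Delta_X$, so every surviving summand is nonnegative. The terms with $E_n-E_0=\Delta_X$ contribute zero. For the terms with $E_n-E_0>\Delta_X$ and $X_{n0}\ne0$, the definition \eqref{eq:GammaX} gives $(E_n-E_0)-\Delta_X\ge\Gamma_X$. Therefore
\[
\Delta_X\cD_X\ \ge\ \Gamma_X\sum_{E_n-E_0>\Delta_X}|X_{n0}|^2 .
\]
The left side of \eqref{eq:spectral-tail-deficit} is an identity: $(I-Q_X)X\psi_0$ is the expansion of $X\psi_0$ over eigenvectors with $E_n\neq E_0+\Delta_X$. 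The only such components that are nonzero have $E_n-E_0>\Delta_X$, since the ground-state component vanishes and no active level lies below the shell. Parseval then gives the stated sum. If $\Gamma_X=+\infty$, the tail is empty and the bound holds trivially with the stated convention.

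For the one-dimensional shell, orthogonality of $Q_X$ and $I-Q_X$ gives the Pythagorean split
\[
\|X\psi_0\|^2=|\ip{\phi_X}{X\psi_0}|^2+\|(I-Q_X)X\psi_0\|^2 .
\]
I will divide by $\|X\psi_0\|^2=\Var_0(X)$ and apply \eqref{eq:spectral-tail-deficit} to obtain \eqref{eq:one-dimensional-active-shell}. This needs $X\psi_0\neq0$, which is implicit in the normalization appearing in the statement and is automatic whenever $\Delta_X$ is defined.

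No step is a real obstacle; the whole argument is bookkeeping on two absolutely convergent series. The only point needing care is the treatment of degenerate eigenvalues in the first active shell. Here the deficit identity must be read as a sum over an orthonormal eigenbasis adapted to those degeneracies, so that $Q_X$ and $I-Q_X$ pick out exactly the indices with $E_n-E_0=\Delta_X$ and its complement.
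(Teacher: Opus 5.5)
Your proposal is correct and follows the paper's proof essentially step for step. You subtract $\Delta_X\Var_0(X)$ from the TRK sum to get the deficit identity, bound the tail terms using $(E_n-E_0)-\Delta_X\ge\Gamma_X$, and get the one-dimensional statement from the Pythagorean split normalized by $\|X\psi_0\|^2=\Var_0(X)$. Your caution about degeneracies is harmless but not needed, since $Q_X$ is a spectral projection and any orthonormal eigenbasis already splits along the eigenspaces of $H$.
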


\begin{proof}
If $\Gamma_X=+\infty$, then $X\psi_0$ is already entirely supported in the first
active shell and there is nothing to prove. We therefore assume below that
$\Gamma_X<\infty$.

By Proposition \ref{prop:commutators},
\[
\frac{\hbar^2}{2M}
=\sum_{n\ge 1}(E_n-E_0)|X_{n0}|^2.
\]
Subtracting $\Delta_X\Var_0(X)=\Delta_X\sum_{n\ge 1}|X_{n0}|^2$ and dividing by
$\Delta_X$ yields \eqref{eq:deficit-identity}.

Now
\[
\|(I-Q_X)X\psi_0\|^2
=\sum_{\substack{n\ge 1\\ E_n-E_0>\Delta_X}} |X_{n0}|^2.
\]
For every term in this sum,
\[
(E_n-E_0)-\Delta_X\ge \Gamma_X.
\]
Therefore
\[
\Delta_X\cD_X
=\sum_{\substack{n\ge 1\\ E_n-E_0>\Delta_X}}
\bigl((E_n-E_0)-\Delta_X\bigr)|X_{n0}|^2
\ge
\Gamma_X\|(I-Q_X)X\psi_0\|^2,
\]
which proves \eqref{eq:spectral-tail-deficit}.

If $Q_X$ is one-dimensional, then
\[
\left\|(I-Q_X)\frac{X\psi_0}{\|X\psi_0\|}\right\|^2
=1-\left|\left\langle \phi_X,\frac{X\psi_0}{\|X\psi_0\|}\right\rangle\right|^2.
\]
Using $\|X\psi_0\|^2=\Var_0(X)$ and \eqref{eq:spectral-tail-deficit} gives
\eqref{eq:one-dimensional-active-shell}.
\end{proof}

\begin{remark}
Proposition~\ref{prop:deficit} is purely spectral. It shows that near-saturation
of the sharp bound forces $X\psi_0$ to lie almost entirely in the first active
shell, even before any structural information about the potentials is used.
\end{remark}

\subsection*{Structural defect and quantitative rigidity}

The natural structural defect is
\begin{equation}\label{eq:G}
G(x_1,\dots,x_N):=\sum_{i=1}^N V_i'(x_i)-\Omega^2\sum_{i=1}^N m_i x_i,
\qquad
\Omega=\frac{\Delta_X}{\hbar}.
\end{equation}
This is the many-body analogue of the one-particle rigidity object
$V'(x)-m\Omega^2 x$.

Accordingly, $\|G\|_{L^2(\rho_0)}$ measures approximate harmonicity only
of the total center-of-mass forcing profile in the ground-state
density-weighted sense. This is not a uniform control in configuration space.
Regions where $\rho_0$ is small are weakly tested by this defect.

\begin{proposition}[Structural defect controls deficit and spectral tail]
\label{prop:tail}
Let
\[
\Phi_X:=\bigl((H-E_0)^2-\Delta_X^2\bigr)X\psi_0.
\]
Then
\begin{equation}\label{eq:phi-G}
\Phi_X=\frac{\hbar^2}{M}G\psi_0
\end{equation}
and
\begin{equation}\label{eq:phi-norm}
\|\Phi_X\|^2=\frac{\hbar^4}{M^2}\|G\|_{L^2(\rho_0)}^2,
\qquad
\rho_0:=|\psi_0|^2.
\end{equation}
Moreover,
\begin{equation}\label{eq:deficit-bound-G}
\cD_X
\le
\frac{\hbar^4}{M^2\Delta_X\Gamma_X(2\Delta_X+\Gamma_X)^2}\,
\|G\|_{L^2(\rho_0)}^2,
\end{equation}
and
\begin{equation}\label{eq:tail-bound}
\|(I-Q_X)X\psi_0\|^2
\le
\frac{\hbar^4}{M^2\Gamma_X^2(2\Delta_X+\Gamma_X)^2}\,
\|G\|_{L^2(\rho_0)}^2.
\end{equation}
\end{proposition}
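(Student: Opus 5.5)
We prove the identity for $\Phi_X$ first and then obtain both inequalities by spectral analysis of the vector $X\psi_0$.

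\emph{The identity for $\Phi_X$.} As in the proof of Theorem~\ref{thm:necessity}, we use $(H-E_0)X\psi_0=[H,X]\psi_0$, which holds because $H\psi_0=E_0\psi_0$. Applying the same step once more gives $(H-E_0)^2X\psi_0=[H,[H,X]]\psi_0$. This is justified on $\cD$ by (A3), since $\psi_0\in\cD$ and $X\psi_0\in\operatorname{Dom}(H)$. Proposition~\ref{prop:second-double} then gives
\[
(H-E_0)^2X\psi_0=\frac{\hbar^2}{M}\Bigl(\sum_i V_i'(x_i)\Bigr)\psi_0 .
\]
We also have $\Delta_X^2X\psi_0=\frac{\hbar^2}{M}\Omega^2\bigl(\sum_i m_ix_i\bigr)\psi_0$, because $MX=\sum_i m_ix_i$ and $\Delta_X=\hbar\Omega$. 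Subtracting the two displays yields \eqref{eq:phi-G}. Taking squared norms of the multiplication-operator expression gives \eqref{eq:phi-norm}.

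\emph{Spectral expansion.} Write $\lambda_n:=E_n-E_0$ for the indices with $X_{n0}\neq 0$; there is no ground-state component because $\avg{X}=0$. Then
\[
\|\Phi_X\|^2=\sum_{n\ge1}(\lambda_n^2-\Delta_X^2)^2|X_{n0}|^2 .
\]
The terms with $\lambda_n=\Delta_X$ vanish. For every remaining active term we have $\lambda_n\ge\Delta_X+\Gamma_X$, hence
\[
\lambda_n^2-\Delta_X^2=(\lambda_n-\Delta_X)(\lambda_n+\Delta_X)\ge (\lambda_n-\Delta_X)(2\Delta_X+\Gamma_X),
\]
and also $\lambda_n-\Delta_X\ge\Gamma_X$. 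This gives two estimates:
\[
\|\Phi_X\|^2\ge \Gamma_X^2(2\Delta_X+\Gamma_X)^2\sum_{\lambda_n>\Delta_X}|X_{n0}|^2,
\]
which is \eqref{eq:tail-bound} after inserting \eqref{eq:phi-norm}, and
\[
\|\Phi_X\|^2\ge \Gamma_X(2\Delta_X+\Gamma_X)^2\sum_{n}(\lambda_n-\Delta_X)|X_{n0}|^2 .
\]
By the deficit identity \eqref{eq:deficit-identity} the last sum equals $\Delta_X\cD_X$, and \eqref{eq:deficit-bound-G} follows. If $\Gamma_X=+\infty$, both left-hand sides vanish and the convention $1/\Gamma_X=0$ applies.

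\emph{Main obstacle.} The only delicate point is the functional-analytic justification of $(H-E_0)^2X\psi_0=[H,[H,X]]\psi_0$, together with the finiteness of $\|\Phi_X\|$. This requires $[H,X]\psi_0=-\frac{i\hbar}{M}P\psi_0$ to lie in $\operatorname{Dom}(H)$. I would handle this with the common invariant core $\cD$ from (A3): $P\psi_0\in\cD$, and the commutator identity $[H,P]=i\hbar\sum_iV_i'(x_i)$ holds on $\cD$. If $G\psi_0\notin L^2$, the right-hand sides of the two bounds are infinite and the inequalities hold trivially.
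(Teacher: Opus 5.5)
Your proposal is correct and follows the paper's proof essentially step for step. The paper likewise obtains $\Phi_X=\frac{\hbar^2}{M}G\psi_0$ from $(H-E_0)^2X\psi_0=[H,[H,X]]\psi_0$, then uses the same spectral expansion with $\lambda_n^2-\Delta_X^2=(\lambda_n-\Delta_X)(\lambda_n+\Delta_X)$ (its $\mu_n(2\Delta_X+\mu_n)$), the same two lower bounds from $\lambda_n-\Delta_X\ge\Gamma_X$, and the deficit identity. Your closing caveat about $G\psi_0\notin L^2$ is vacuous: under (A3), $\Phi_X\in L^2$ together with $\Phi_X=\frac{\hbar^2}{M}G\psi_0$ already forces $\|G\|_{L^2(\rho_0)}<\infty$.
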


\begin{proof}
If $\Gamma_X=+\infty$, then the active spectral tail above $\Delta_X$ vanishes
identically and the conclusions are immediate from \eqref{eq:phi-norm}. We may
therefore assume $\Gamma_X<\infty$.

By the same computation as before,
\[
(H-E_0)^2X\psi_0=[H,[H,X]]\psi_0.
\]
Using \eqref{eq:HHX} and the definition of $\Omega$,
\[
\Phi_X
=\left(\frac{\hbar^2}{M}\sum_{i=1}^N V_i'(x_i)-\Delta_X^2 X\right)\psi_0
=\frac{\hbar^2}{M}\left(\sum_{i=1}^N V_i'(x_i)-\Omega^2\sum_{i=1}^N m_i x_i\right)\psi_0,
\]
which is \eqref{eq:phi-G}. Squaring gives \eqref{eq:phi-norm}.

On the spectral side,
\[
\|\Phi_X\|^2
=\sum_{n\ge 1}\bigl((E_n-E_0)^2-\Delta_X^2\bigr)^2|X_{n0}|^2.
\]
Write
\[
\mu_n:=(E_n-E_0)-\Delta_X\ge 0.
\]
Then
\[
(E_n-E_0)^2-\Delta_X^2=\mu_n(2\Delta_X+\mu_n).
\]
For terms with $\mu_n>0$ we have $\mu_n\ge \Gamma_X$, hence
\[
\mu_n^2(2\Delta_X+\mu_n)^2
\ge \Gamma_X(2\Delta_X+\Gamma_X)^2\,\mu_n
\]
and also
\[
\mu_n^2(2\Delta_X+\mu_n)^2
\ge \Gamma_X^2(2\Delta_X+\Gamma_X)^2.
\]
Therefore,
\[
\|\Phi_X\|^2
\ge
\Gamma_X(2\Delta_X+\Gamma_X)^2
\sum_{n\ge 1}\mu_n |X_{n0}|^2
=
\Gamma_X\Delta_X(2\Delta_X+\Gamma_X)^2\,\cD_X,
\]
where we used \eqref{eq:deficit-identity} in the last step. This proves
\eqref{eq:deficit-bound-G}.

Similarly,
\[
\|\Phi_X\|^2
\ge \Gamma_X^2(2\Delta_X+\Gamma_X)^2
\sum_{\substack{n\ge 1\\ E_n-E_0>\Delta_X}}|X_{n0}|^2
=
\Gamma_X^2(2\Delta_X+\Gamma_X)^2\|(I-Q_X)X\psi_0\|^2,
\]
which yields \eqref{eq:tail-bound}.
\end{proof}

\begin{remark}
The two estimates in Proposition \ref{prop:tail} complement each other.
Equation \eqref{eq:deficit-bound-G} says that small structural defect forces the
variance to be close to the sharp upper bound. Equation \eqref{eq:tail-bound}
says that it forces the non-extremal part of $X\psi_0$ to leave the higher
active shells. In the positivity class, exact vanishing of $G$ recovers the
common-frequency harmonic structure of Theorem \ref{thm:iff}.

The constants are useful only when the next active shell is separated
from the first one. If $\Gamma_X$ is small, the estimates correctly become
weak. Near-saturation can then involve redistribution of spectral weight
between almost degenerate active shells.
\end{remark}

\section{An illustrative exactly solvable example}

It is useful to record one benchmark. For comparison
with the interacting example in Remark~\ref{rem:active-gap-interacting}, take
two equal masses $m$ with
$U=0$, $V_1(x)=\frac12m\omega^2x^2$, and
$V_2(x)=\frac12m\lambda^2\omega^2x^2$. For
$X=(x_1+x_2)/2$ and $M=2m$, a direct Gaussian calculation gives
\[
\Var_0(X)=\frac{\hbar}{8m\omega}\left(1+\frac1\lambda\right),
\qquad
\Delta_X=\hbar\omega\min\{1,\lambda\}.
\]
Hence
\begin{equation}\label{eq:explicit-ratio}
\mathcal R(\lambda)
:=\frac{2M\Delta_X}{\hbar^2}\Var_0(X)
=\frac12\bigl(1+\min\{\lambda,\lambda^{-1}\}\bigr)\le 1,
\end{equation}
with equality only if $\lambda=1$. This compact benchmark detects the
matching-curvature condition, whereas Remark~\ref{rem:active-gap-interacting}
supplies the genuinely interacting example and shows that the active and
ordinary gaps may differ.

\section{Conclusion}

The center-of-mass observable in a one-dimensional many-body system with
translation-invariant interactions enjoys the exact algebra
\[
[H,X]=-\frac{i\hbar}{M}P,
\qquad
[X,[H,X]]=\frac{\hbar^2}{M},
\qquad
[H,[H,X]]=\frac{\hbar^2}{M}\sum_{i=1}^N V_i'(x_i).
\]
This yields the general sharp bound
\[
\Var_0(X)\le \frac{\hbar^2}{2M\Delta_X}.
\]
Under the positivity hypothesis \textup{(P)}, the converse direction also
closes. Saturation is equivalent to the fact that every one-body trap is
harmonic with the same frequency $\Omega=\Delta_X/\hbar$, up to shifts of the
individual trap centers, while the interaction remains arbitrary inside the
translation-invariant class.

The near-saturation theory is encoded by two complementary quantities. The
variance deficit
\[
\cD_X=\frac{\hbar^2}{2M\Delta_X}-\Var_0(X)
\]
controls the weight of $X\psi_0$ outside the first active shell. The structural
defect
\[
G=\sum_{i=1}^N V_i'(x_i)-\Omega^2\sum_{i=1}^N m_i x_i,
\]
controls both $\cD_X$ and the same spectral tail through the ground-state
density. These estimates should be viewed as quantitative consequences of the
exact commutator mechanism.

The conceptual payoff is therefore the following: for the rigid interacting
center-of-mass mode, harmonic confinement is not merely a convenient solvable
case and not merely a setting in which decoupling happens dynamically. In the
class considered here, it is the sharp static extremizer at fixed active gap,
and in the positivity class it is the unique extremizer.

\printbibliography

\end{document}